\newcounter{minutes}\setcounter{minutes}{\time}
\newcounter{hours}\setcounter{hours}{\time}
\title{Moment inequalities of the second and third order}
\author{Slavko Simi\'c}
\address{ Mathematical Institute SANU, Kneza Mihaila 36, 11000
Belgrade, Serbia} \email{ ssimic@turing.mi.sanu.ac.rs}
\keywords{moment inequalities; Jensen's inequality; positive
semi-definite form; relative divergence.} \subjclass{60E15}
\newtheorem{lemma}[equation]{Lemma}
\newcommand{\beq}{\begin{equation}}
\newcommand{\eeq}{\end{equation}}
\numberwithin{equation}{section}
\begin{document}

%===============================================================================
%%%%%%%% BEGIN TIMESTAMP
\def\thefootnote{}
\footnotetext{ \texttt{\tiny File:~\jobname .tex,
          printed: \number\year-\number\month-\number\day,
          \thehours.\ifnum\theminutes<10{0}\fi\theminutes}
} \makeatletter\def\thefootnote{\@arabic\c@footnote}\makeatother
%%%%%%%% END TIMESTAMP
%===============================================================================

%========================================================================
%\begin{center}
%\texttt{File:~\jobname .tex,
%          printed: \number\year-0\number\month-0\number\day,
%          \thehours.\ifnum\theminutes<10{0}\fi\theminutes}
%\end{center}
%=========================================================================

\maketitle

%===============================================================================
\begin{abstract}
In this paper we give refinements of some convex and log-convex
moment inequalities of the second and third order using a special
kind of an algebraic positive semi-definite form. An open problem
concerning eight parameter refinement of the third order is also
stated with some applications in Information Theory concerning
relative divergence of type $s$.
\end{abstract}
%===============================================================================

%\pagestyle{plain}

%===============================================================================
\section{Introduction}
%============================================================

\vspace{0.5cm}

Inequalities for moments of s-th order $EX^s$, of a probability
law with support on $\mathbb R^+ $, are of fundamental interest in
Probability Theory. Most known of them all is Jensen's Moment
Inequality, given by

$$
\begin{array}{ll}
EX^s\ge (EX)^s, & s\in (-\infty,0)\cup (1, +\infty);\\
EX^s\le (EX)^s, & s\in (0,1).
\end{array}.
$$

\vspace{0.5cm}

Or, more generally,

\vspace{0.5cm}

{\bf Theorem A} {\it If $F$ is a convex function on an interval
$I\subset\mathbb R$, then

$$
E(F(X))\ge F(EX)
$$

for any probability law with support on $I$.}

 \vspace{0.5cm}

 The topic of research in this article is a difference of
 moments, introduced in the following way.

For an arbitrary probability law with support on $(0, \infty)$,
define

$$
\lambda_s=\lambda_s(X):=\left\{
\begin{array}{ll}
(EX^s-(EX)^s)/s(s-1), & s\neq 0, 1;\\
\log(EX)-E(\log X), & s=0;\\
E(X\log X)-(EX)\log(EX), & s=1.
\end{array}\right.
$$

\vspace{0.5cm}

Throughout the paper we suppose that moments exist for all $s\in
I\subset\mathbb R$.

\bigskip

 The above Jensen's Moment Inequality yields that
$\lambda_s\ge 0$ for $s\in I$.

\bigskip

It is also known (\cite{s}) that $\lambda_s$ is log-convex (hence
convex) in $s$, that is
$$
\xi(s, t):=\lambda_s\lambda_t-\lambda^2_\frac{s+t}{2}\ge 0; s,
t\in I.\eqno(1)
$$

\vspace{0.5cm}

 Moreover, we proved the following mixture of Jensen-Lyapunov moment
 inequalities (\cite{st}),

\vspace{0.5cm}

 {\bf Theorem B} {\it For any $-\infty<r<s<t<+\infty$, we have}
 $$
(\lambda_s)^{t-r}\le (\lambda_r)^{t-s}(\lambda_t)^{s-r}. \eqno (2)
 $$

\vspace{0.5cm}

Note that for $s\in \mathbf N; r, t\in \mathbf {2N}$, the relation
(2) is valid for arbitrary probability distributions with support
on $(-\infty, +\infty)$, (\cite{ss}).

\vspace{0.5cm}

\section{ Results}

\vspace{0.5cm}

Our aim in this article is to give some refinements of the above
moment inequalities.

\vspace{0.5cm}

Namely, the fact that $\lambda_s$ is convex i.e.
$\tau(s,t):=\lambda_s-2\lambda_\frac{s+t}{2}+\lambda_t\ge 0$, is
improved to the following four parameter inequality of second
order.

\vspace{0.5cm}

{\bf Theorem 1} {\it For arbitrary $s,t,u,v\in I$, we have}

$$
\tau(s,t)\tau(u,v)\ge
[\tau(\frac{s+u}{2},\frac{t+v}{2})-\tau(\frac{s+v}{2},\frac{t+u}{2})]^2.
$$

\vspace{0.5cm}

As a corollary, we get the next interesting assertion.

\vspace{0.5cm}

{\bf Corollary 1} {\it For fixed $a\in\mathbb R$, the function
$\mu_a(t):=\lambda_t-2\lambda_{t+a/2}+\lambda_{t+a}$ is log-convex
in $t$.}

\vspace{0.5cm}

Also,

\vspace{0.5cm}

{\bf Theorem 2}  {\it For arbitrary $p,q\ge 0$, the function
$w(s,t):=p^2\lambda_s+2pq \lambda_\frac{s+t}{2}+q^2\lambda_t$ is
log-convex in $s,t\in I$, that is

$$
w(s,t) w(u,v)\ge [w(\frac{s+u}{2}, \frac{t+v}{2})]^2.
$$}

\vspace{0.5cm}

The inequality (1) can be improved to the next assertion of second
order.

\vspace{0.5cm}

{\bf Theorem 3} {\it Denote
$$
\phi(s,t;u,v):=\xi(s,t)\xi(u,v)-[\xi(\frac{s+u}{2},\frac{t+v}{2})-\xi(\frac{s+v}{2},\frac{t+u}{2})]^2.
$$

Then the inequality
$$
\phi(s,t;u,v)\ge 0,
$$

holds for each $s,t,u,v\in I$.}

\bigskip

The above inequality is very sharp. For instance, taking the
discrete probability law with masses $x$ and $y$ and assigned
probabilities $p$ and $q$, a calculation shows that
$$
\phi(2,4;2,6)=C(p,q)(x-y)^{14},
$$

where

$$
C(p,q)=\frac{(pq)^4}{5529600}[35+11(p-q)^2+17(p-q)^4+(p-q)^6].
$$

\vspace{0.5cm}

As a consequence we obtain the next log-convexity assertion.

\vspace{0.5cm}

{\bf Corollary 2} {\it For some $a\in \mathbb R$, the function
$\sigma(x)=\sigma_a(x):=\xi(x, x+a)$ is log-convex on I.}

\vspace{0.5cm}

Based on a plenty of calculated examples, we conclude that
refinements of the third order are also possible.

 Its general form is given by the
following 8-parameters hypothesis.

\vspace{0.5cm}

{\bf Conjecture 1} {\it The inequality

$$
\phi(r_1,s_1;u_1,v_1)\phi(r_2,s_2;u_2,v_2)\ge
$$
$$
[(\xi(\frac{r_1+r_2}{2},\frac{s_1+s_2}{2})-\xi(\frac{r_1+s_2}{2},\frac{s_1+r_2}{2}))(\xi(\frac{u_1+u_2}{2},\frac{v_1+v_2}{2})-\xi(\frac{u_1+v_2}{2},\frac{v_1+u_2}{2}))
$$
$$
-(\xi(\frac{r_1+u_2}{2},\frac{s_1+v_2}{2})-\xi(\frac{r_1+v_2}{2},\frac{s_1+u_2}{2}))(\xi(\frac{u_1+r_2}{2},\frac{v_1+s_2}{2})-\xi(\frac{u_1+s_2}{2},\frac{v_1+r_2}{2}))]^2
$$

holds for arbitrary $r_i, s_i, u_i, v_i\in I, i\in\{1,2\}$?}

\vspace{0.5cm}

We are able to prove Conjecture 1  in some particular cases.

\bigskip

The first one is for $r_1=u_1=r_2, s_1=u_2, v_1=s_2=v_2$.
Therefore, we obtain a 3-parameter refinement of the third order,
given by

\bigskip

 {\bf Theorem 4} {\it  For any $r,s,v\in I$, we have
$$
\phi(r,s; r,v) \phi(r,v; s,v)\ge
$$
$$
\ge [\xi(r,v)(\xi(s,\frac{r+v}{2})-\xi(\frac{r+s}{2},
\frac{s+v}{2}))
$$
$$
+ (\xi(r,\frac{s+v}{2})-\xi(\frac{r+s}{2},
\frac{r+v}{2}))(\xi(v,\frac{r+s}{2})-\xi(\frac{r+v}{2},\frac{s+v}{2}))]^2.
$$}

\vspace{0.5cm}

The second case is conditioned by $r_1=r_2, s_1=c_1=s_2=u_2$. The
4-parameter improvement of the third order follows as

\vspace{0.5cm}

 {\bf Theorem 5} {\it  For any $r,s,u,v\in I$, we
have

 $$
\phi(r,s; s,u) \phi(r,s; s,v)\ge
$$
$$
\ge [\xi(r,s)(\xi(s,\frac{u+v}{2})-\xi(\frac{s+u}{2},
\frac{s+v}{2}))
$$
$$
- (\xi(s,\frac{r+u}{2})-\xi(\frac{r+s}{2},
\frac{s+u}{2}))(\xi(s,\frac{r+v}{2})-\xi(\frac{r+s}{2},\frac{s+v}{2}))]^2.
$$}

\vspace{0.5cm}

Another conjecture is related to the function
$\sigma_a(x):=\xi(x,x+a)$. By the result of Corollary 2, we have
that
$$
\theta_a(x,y):=\sigma_a(x)\sigma_a(y)-\sigma_a^2(\frac{x+y}{2})\ge
0.
$$

It seems that an inequality analogous to the one from Theorem 3
holds in this case.

\vspace{0.5cm}

{\bf Conjecture 2} {\it Is it true that

$$
\theta_a(x,y)\theta_a(z,v)\ge
[\theta_a(\frac{x+z}{2},\frac{y+v}{2})-\theta_a(\frac{x+v}{2},\frac{y+z}{2})]^2?
$$}

\vspace{0.5cm}

\section{Proofs}

\vspace{0.5cm}

In order to prove our results, the following Main Lemma is of
crucial importance.

\begin{lemma}\label{l1}
For arbitrary real numbers $a,b,c,d$ and $r,s,u,v\in I$, the form
$$
\psi(a,b,c,d):=[a^2\lambda_r+2ab\lambda_\frac{r+s}{2}+b^2\lambda_s][c^2\lambda_u+2cd\lambda_\frac{u+v}{2}+d^2\lambda_v]
$$
$$
- [ac\lambda_\frac{r+u}{2}+ad\lambda_\frac{r+v}{2}
+bc\lambda_\frac{s+u}{2} +bd\lambda_\frac{s+v}{2}]^2
$$

is positive semi-definite i.e., $\psi(a,b,c,d)\ge 0$.
\end{lemma}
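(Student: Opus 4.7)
The plan is to reduce the Main Lemma to the Cauchy--Schwarz inequality in a single Hilbert space of functions on $(0, \infty)$. The crux is an integral representation of the form
$$\lambda_s = \int_0^\infty h(t)\, t^{s-2}\, dt,$$
valid for every $s \in I$, where $h \colon (0,\infty) \to [0, \infty)$ is a nonnegative function that does \emph{not} depend on $s$. Once such a representation is in hand, every $\lambda$ appearing in the lemma becomes an inner product of two elementary functions, and $\psi \ge 0$ reduces to a single Cauchy--Schwarz estimate.

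To establish the representation, let $f_s$ be the antiderivative with $f_s''(t) = t^{s-2}$ on $(0, \infty)$, i.e.\ $f_s(x) = x^s/(s(s-1))$ for $s \neq 0, 1$, and $f_0(x) = -\log x$, $f_1(x) = x\log x$ at the borderline exponents. In each of the three cases the definition of $\lambda_s$ reads $\lambda_s = E f_s(X) - f_s(EX)$. Writing $m := EX$ and applying Taylor's formula with integral remainder about $m$, one obtains
$$f_s(X) - f_s(m) - f_s'(m)(X - m) = \int_0^\infty t^{s-2}\, k(X, t)\, dt,$$
with the nonnegative kernel $k(X, t) := (X-t)\mathbf{1}_{m \le t \le X} + (t - X)\mathbf{1}_{X \le t \le m}$. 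Taking expectations (the linear term vanishes because $E(X-m)=0$) and exchanging expectation with the Lebesgue integral via Fubini yields $\lambda_s = \int_0^\infty t^{s-2} h(t)\, dt$ with $h(t) := E\,k(X, t) \ge 0$. The existence of all moments on $I$ makes these integrals finite.

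Next I set $L_s(t) := \sqrt{h(t)}\, t^{s/2 - 1}$, viewed as an element of the real Hilbert space $\mathcal{H} := L^2((0, \infty), dt)$. A one-line computation gives
$$\langle L_s, L_u\rangle_{\mathcal{H}} = \int_0^\infty h(t)\, t^{(s+u)/2 - 2}\, dt = \lambda_{(s+u)/2},$$
and in particular $\|L_s\|^2 = \lambda_s$. Therefore
$$a^2\lambda_r + 2ab\lambda_{(r+s)/2} + b^2\lambda_s = \|aL_r + bL_s\|^2,$$
and similarly for the second factor on the left of $\psi$, while the scalar that is squared on the right of $\psi$ equals $\langle aL_r + bL_s,\, cL_u + dL_v\rangle_{\mathcal{H}}$. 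The Main Lemma is then nothing other than the Cauchy--Schwarz inequality
$$\|aL_r + bL_s\|^2 \cdot \|cL_u + dL_v\|^2 \ge \langle aL_r + bL_s,\, cL_u + dL_v\rangle_{\mathcal{H}}^{\,2}$$
applied to these two vectors in $\mathcal{H}$.

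The step I expect to be the main obstacle is the verification of the integral representation: a uniform treatment of the borderline exponents $s\in\{0,1\}$, a careful check that $\lambda_s<\infty$ for $s\in I$ entails $L_s\in\mathcal{H}$, and the justification of the Fubini exchange; once these points are settled, the remainder is purely algebraic. An equivalent, more structural way to view the conclusion is that the whole infinite matrix $(\lambda_{(s+u)/2})_{s,u \in I}$ is a Gram matrix of the family $\{L_s\}_{s\in I}\subset\mathcal{H}$, hence positive semi-definite, and $\psi(a,b,c,d) \ge 0$ is merely the $2\times 2$ principal-minor condition for the pair $aL_r + bL_s,\; cL_u + dL_v$.
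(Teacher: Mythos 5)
Your proof is correct, but it takes a genuinely different route from the paper's. The paper never leaves the realm of Jensen's inequality: for auxiliary parameters $X,Y$ it forms the function $F(x)=\alpha(x)X^2+2\beta(x)XY+\gamma(x)Y^2$ built from the convex functions $f_s$ (with $f_s''(x)=x^{s-2}$), computes
$$
F''(x)=\frac{1}{x^2}\bigl[(ax^{r/2}+bx^{s/2})X+(cx^{u/2}+dx^{v/2})Y\bigr]^2\ge 0,
$$
applies Theorem A to conclude that $\alpha X^2+2\beta XY+\gamma Y^2\ge 0$ where $\alpha,\beta,\gamma$ are exactly the three bracketed expressions of the lemma, observes $\alpha,\gamma\ge 0$ (via log-convexity of $\lambda$), and finishes with the $2\times 2$ determinant criterion $\alpha\gamma-\beta^2\ge 0$. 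You instead make the positivity explicit by representing the Jensen gap itself: Taylor's formula with integral remainder yields $\lambda_s=\int_0^\infty h(t)\,t^{s-2}\,dt$ with $h(t)=E\,k(X,t)\ge 0$ independent of $s$, and then $\psi\ge 0$ is the Cauchy--Schwarz inequality for the vectors $aL_r+bL_s$ and $cL_u+dL_v$ in $L^2((0,\infty),dt)$. The two arguments draw on the same source of positivity, namely $f_s''(x)=x^{s-2}$ together with the perfect-square structure $(ax^{r/2}+bx^{s/2})^2$; in fact your kernel $h$ is precisely the density through which the paper's Jensen step integrates $F''$, so your proof is the ``integrated-out'' form of theirs. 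What the paper's version buys is economy: beyond the standing assumption that moments exist, it needs no measurability, Tonelli, or integrability checks, and nonnegativity of $\alpha,\gamma$ comes for free from known log-convexity. What yours buys is strictly more structure: it exhibits $(\lambda_{(s+t)/2})_{s,t\in I}$ as a Gram matrix, so \emph{every} finite matrix $(\lambda_{(s_i+s_j)/2})_{i,j}$ is positive semi-definite---a statement stronger than the lemma, from which inequality (1), Theorem 2, and the lemma itself all follow as principal-minor facts. Your technical caveats are the right ones and all go through: the affine renormalization of $f_0,f_1$ is harmless because $\lambda_s=Ef_s(X)-f_s(EX)$ is invariant under adding affine functions, the kernel $k$ is nonnegative so Tonelli suffices for the exchange of $E$ and $\int$, the identity $\langle L_s,L_u\rangle=\lambda_{(s+u)/2}$ is legitimate since midpoints of points of $I$ lie in $I$ ($I$ being an interval), and $\lambda_s<\infty$ for $s\in I$ gives $L_s\in L^2$.
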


\begin{proof}
For a variable $x\in\mathbb R^+$ and a parameter $s\in \mathbb R$
consider the function $f_s(x)$, defined as

$$
f_s(x):=\begin{cases} {(x^s-sx+s-1)/s(s-1)} &, s(s-1)\neq 0;\\
              x-\log x-1 &, s=0;\\
              x\log x-x+1 &, s=1.
              \end{cases}
              $$
              \bigskip
Since,
$$
f'_s(x)=\begin{cases} {x^{s-1}-1\over s-1} &, s(s-1)\neq 0;\\
               1-{1\over x} &, s=0;\\
               \log x &, s=1,
               \end{cases}
               $$
and
$$
f''_s(x)=x^{s-2},
$$

it follows that $f_s(x)$ is a twice continuously differentiable
convex function.

\vspace{0.5cm}

Now, for arbitrary parameters $X,Y\in\mathbb R$, denote

$$
F(x):=(a^2f_r(x)+2abf_\frac{r+s}{2}(x)+b^2f_s(x))X^2+2(ac
f_\frac{r+u}{2}(x)+ad f_\frac{r+v}{2}(x) +bc f_\frac{s+u}{2}(x)
+bd f_\frac{s+v}{2}(x))XY
$$
$$
+(c^2f_u(x)+2cdf_\frac{u+v}{2}(x)+d^2f_v(x))Y^2.
$$

Since,

$$
F''(x)=\frac{1}{x^2}[(a^2x^r+2abx^\frac{r+s}{2}+b^2x^s)X^2+2(ac
x^\frac{r+u}{2}+ad x^\frac{r+v}{2} +bc x^\frac{s+u}{2} +bd
x^\frac{s+v}{2})XY
$$
$$
+(c^2x^u+2cdx^\frac{u+v}{2}+d^2x^v)Y^2]
$$
$$
=\frac{1}{x^2}[(ax^\frac{r}{2}+bx^\frac{s}{2})^2X^2+2(ax^\frac{r}{2}+bx^\frac{s}{2})(cx^\frac{u}{2}+dx^\frac{v}{2})XY
+(cx^\frac{u}{2}+dx^\frac{v}{2})^2Y^2]
$$
$$
=\frac{1}{x^2}[(ax^\frac{r}{2}+bx^\frac{s}{2})X+(cx^\frac{u}{2}+dx^\frac{v}{2})Y]^2,
$$

we conclude that $F$ is a convex function for $x>0$.

\bigskip

Applying Theorem A in this case, we obtain

$$
(a^2\lambda_r+2ab\lambda_\frac{r+s}{2}+b^2\lambda_s)X^2+2(ac
\lambda_\frac{r+u}{2}+ad\lambda_\frac{r+v}{2} +bc
\lambda_\frac{s+u}{2} +bd\lambda_\frac{s+v}{2})XY
$$
$$
+(c^2\lambda_u+2cd\lambda_\frac{u+v}{2}+d^2\lambda_v)Y^2\ge 0.
$$

Since the function $\lambda$ is log-convex, note that the terms
$$
\alpha:=a^2\lambda_r+2ab\lambda_\frac{r+s}{2}+b^2\lambda_s
$$

and

$$
\gamma:=c^2\lambda_u+2cd\lambda_\frac{u+v}{2}+d^2\lambda_v
$$

are non-negative for arbitrary $a,b,c,d\in\mathbb R$.

\vspace{0.5cm}

On the other hand, the form
$$
\alpha X^2+2\beta XY+\gamma Y^2, \ \alpha,\gamma\ge 0,
$$

is positive semi-definite if and only if $\alpha\gamma-\beta^2\ge
0$.

The proof of Lemma \ref{l1} readily follows.

\end{proof}

\vspace{0.5cm}

\begin{proof} of Theorem 1.

\vspace{0.5cm}

 Taking $a=c=-b=-d=1$ in Lemma \ref{l1}, we obtain

$$
[\lambda_s-2\lambda_\frac{s+t}{2}+\lambda_t][\lambda_u-2\lambda_\frac{u+v}{2}+\lambda_v]
$$
$$
\ge[\lambda_\frac{s+u}{2}-\lambda_\frac{s+v}{2}+\lambda_\frac{t+v}{2}-\lambda_\frac{t+u}{2}]^2,
$$

 which is equivalent to the result of Theorem 1.

 \vspace{0.5cm}

For $s=t+a, v=u+a$, we get
$$
\mu_a(t)\mu_a(u)\ge\mu_a^2(\frac{t+u}{2}),
$$

i.e. the assertion from Corollary 1.

\end{proof}

\vspace{0.5cm}

\begin{proof} of Theorem 2.

\vspace{0.5cm}

Since $\lambda(s)$ is a convex function, taking $a=c=p; b=d=q;
p,q>0$ in the above lemma, we get

$$
w(s,t)w(u,v)=(p^2\lambda_s+2pq
\lambda_\frac{s+t}{2}+q^2\lambda_t)(p^2\lambda_u+2pq
\lambda_\frac{u+v}{2}+q^2\lambda_v)
$$
$$
\ge [p^2\lambda_\frac{s+u}{2}+pq(\lambda_
\frac{s+v}{2}+\lambda_\frac{t+u}{2})+q^2\lambda_\frac{t+v}{2}]^2
$$
$$
\ge [p^2\lambda_\frac{s+u}{2}+2pq \lambda_
\frac{s+u+t+v}{4}+q^2\lambda_\frac{t+v}{2}]^2 =
w^2(\frac{s+u}{2},\frac{t+v}{2}),
$$

as desired.

\end{proof}

\vspace{0.5cm}

\begin{proof} of Theorem 3.

\vspace{0.5cm}

For complete proof of the assertion of this theorem, we should
preliminary prove that it is valid in the special case $v=s$, that
is

\begin{lemma}\label{l2}
For arbitrary $s,t,u\in I$, we have that

$$
\phi(s,t;s,u):=\xi(s,t)\xi(s,u)-[\xi(s,\frac{t+u}{2})-\xi(\frac{s+u}{2},\frac{s+t}{2})]^2\ge
0.
$$
\end{lemma}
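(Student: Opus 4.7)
The plan is to reduce the stated inequality to a positive semi-definiteness statement for a $3\times 3$ symmetric matrix of $\lambda$-values, and then to establish that statement by the same convexity device used in the proof of Lemma \ref{l1}.

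First I would notice a cancellation: in the difference $\xi(s,\frac{t+u}{2})-\xi(\frac{s+u}{2},\frac{s+t}{2})$, both squared $\lambda$-terms equal $\lambda_{(2s+t+u)/4}^2$ and therefore cancel, leaving the bilinear expression $\lambda_s\lambda_{(t+u)/2}-\lambda_{(s+u)/2}\lambda_{(s+t)/2}$. Abbreviating $A=\lambda_s$, $B=\lambda_t$, $C=\lambda_u$, $D=\lambda_{(s+t)/2}$, $E=\lambda_{(s+u)/2}$, $G=\lambda_{(t+u)/2}$, the inequality to prove becomes $(AB-D^2)(AC-E^2)\ge (AG-DE)^2$, and a direct expansion yields the identity
$$
(AB-D^2)(AC-E^2)-(AG-DE)^2 \;=\; A\cdot\det M,\qquad M:=\begin{pmatrix} A & D & E \\ D & B & G \\ E & G & C\end{pmatrix}.
$$
Since $A=\lambda_s\ge 0$ by Jensen's moment inequality, it suffices to show that $M$ is positive semi-definite; in particular $\det M\ge 0$.

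Positive semi-definiteness of $M$ is proved by extending the construction used in Lemma \ref{l1} from two to three parameters. For arbitrary real $a,b,c$, set
$$
\Phi(x):=a^2f_s(x)+b^2f_t(x)+c^2f_u(x)+2ab\,f_{(s+t)/2}(x)+2ac\,f_{(s+u)/2}(x)+2bc\,f_{(t+u)/2}(x),
$$
with $f_p$ as introduced in Lemma \ref{l1}. Using $f''_p(x)=x^{p-2}$, a short calculation gives
$$
\Phi''(x)=\frac{1}{x^2}\bigl(ax^{s/2}+bx^{t/2}+cx^{u/2}\bigr)^2\ge 0,
$$
so $\Phi$ is convex on $\mathbb{R}^+$. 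Applying Theorem A to $\Phi(X)$ and using the identity $\lambda_p=E(f_p(X))-f_p(EX)$ implicit in the proof of Lemma \ref{l1}, one obtains
$$
a^2\lambda_s+b^2\lambda_t+c^2\lambda_u+2ab\lambda_{(s+t)/2}+2ac\lambda_{(s+u)/2}+2bc\lambda_{(t+u)/2}\ge 0,
$$
which is precisely the positive semi-definiteness of $M$, and the lemma follows.

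The only real obstacle is a bookkeeping one, namely spotting the cancellation in the first step that brings the left-hand side into the form $A\cdot\det M$. Once this algebraic reduction is in hand, the convex-function trick of Lemma \ref{l1} extends verbatim from two to three coefficients, and no new analytic difficulty appears.
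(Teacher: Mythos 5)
Your proof is correct: the cancellation of the two $\lambda^2_{(2s+t+u)/4}$ terms is real, the identity $(AB-D^2)(AC-E^2)-(AG-DE)^2=A\cdot\det M$ checks out by direct expansion (it is the Schur-complement/Desnanot--Jacobi identity for the symmetric $3\times 3$ matrix $M$), and the three-coefficient convexity device is sound, since $\lambda_p=E(f_p(X))-f_p(EX)$ and $\Phi''(x)=x^{-2}\bigl(ax^{s/2}+bx^{t/2}+cx^{u/2}\bigr)^2\ge 0$ give $M\succeq 0$, hence $\det M\ge 0$; the degenerate case $\lambda_s=0$ is harmless because your identity then forces the left-hand side to vanish. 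Your route, however, is genuinely different from the paper's, even though both run on the same engine (Jensen's inequality applied to a combination of the $f_p$ whose second derivative is a perfect square). The paper obtains Lemma \ref{l2} as an immediate corollary of Lemma \ref{l1}: it sets $a=0$, $b=1$ in $\psi$ (the parameter $r$ drops out), reads $\psi(0,1,c,d)\ge 0$ as a positive semi-definite quadratic form in $(c,d)$ with coefficients $\xi(s,u)$, $\lambda_s\lambda_{(u+v)/2}-\lambda_{(s+u)/2}\lambda_{(s+v)/2}$ and $\xi(s,v)$, and the $2\times 2$ discriminant condition is exactly the claim --- no $3\times 3$ determinant and no fresh appeal to Theorem A are needed. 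Your version re-derives the required positivity from scratch, duplicating work the paper has already banked in Lemma \ref{l1}; indeed, when $\lambda_s>0$ your $M\succeq 0$ is equivalent, via the Schur complement in the $(1,1)$ entry, to the positive semi-definiteness of the paper's $2\times 2$ form. What your approach buys is a cleaner and more general structural statement: the Gram-type matrix $(\lambda_{(p_i+p_j)/2})_{i,j}$ is positive semi-definite for any parameters $p_1,\dots,p_n$, of which the log-convexity inequality (1), Lemma \ref{l1} itself (via Cauchy--Schwarz applied to the $4\times 4$ matrix with the vectors $(a,b,0,0)$ and $(0,0,c,d)$), and Lemma \ref{l2} are all consequences through minors. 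So the paper's proof is shorter given its Main Lemma, while yours is self-contained modulo Theorem A and scales naturally to higher-order refinements.
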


\begin{proof}

Since

$$
\psi(0,1,c,d)=\lambda_s[c^2\lambda_u+2cd\lambda_\frac{u+v}{2}+d^2\lambda_v]-[c\lambda_\frac{s+u}{2}+d\lambda_\frac{s+v}{2}]^2
$$
$$
=[\lambda(s)\lambda(u)-\lambda^2(\frac{s+u}{2})]c^2+2[\lambda(s)\lambda(\frac{u+v}{2})-\lambda(\frac{s+u}{2})\lambda(\frac{s+v}{2})]cd
$$
$$
+[\lambda(s)\lambda(v)-\lambda^2(\frac{s+v}{2})]d^2,
$$

and this quadratic form is positive semi-definite, then
necessarily,

$$
0\le[\lambda_s\lambda_u-\lambda^2_\frac{s+u}{2}][\lambda_s\lambda_v-\lambda^2_\frac{s+v}{2}]
$$
$$
-[\lambda_s\lambda_\frac{u+v}{2}-\lambda_\frac{s+u}{2}\lambda_\frac{s+v}{2}]^2=\phi(s,u;s,v).
$$

\end{proof}

Choose now numbers $a,b,c,d$ in Lemma \ref{l1} such that

$$
a^2=[\lambda_u\lambda^2_\frac{s+v}{2}-2\lambda_\frac{s+u}{2}\lambda_\frac{s+v}{2}\lambda_\frac{u+v}{2}+\lambda_v\lambda^2_\frac{s+u}{2}]/\xi(u,v)
$$
$$
=\frac{1}{\lambda_u}[\lambda^2_\frac{s+u}{2}+(\xi(u,\frac{s+v}{2})-\xi(\frac{s+u}{2},\frac{u+v}{2}))^2/\xi(u,v)];
$$

$$
(b\lambda_s+a\lambda_\frac{r+s}{2})^2=\frac{\xi(r,s)}{\lambda(u)\xi(u,v)}\phi(u,s;u,v);
$$

$$
c=-\lambda_\frac{s+v}{2}/a; \ d=\lambda_\frac{s+u}{2}/a.
$$

\vspace{0.5cm}

A calculation shows that this choice gives

$$
a^2\lambda_r+2ab\lambda_\frac{r+s}{2}+b^2\lambda_s=\frac{1}{\lambda_s}[(b\lambda_s+a\lambda_\frac{r+s}{2})^2+a^2\xi(r,s)]
$$

$$
=\frac{1}{\lambda_s}[\frac{\xi(r,s)}{\lambda_u\xi(u,v)}\phi(u,s;u,v)
+\frac{\xi(r,s)}{\lambda_u}[\lambda^2_\frac{s+u}{2}+\frac{(\xi(u,\frac{s+v}{2})-\xi(\frac{s+u}{2},\frac{u+v}{2}))^2}{\xi(u,v)}]]
$$

$$
=\frac{\xi(r,s)}{\lambda_u\lambda_s\xi(u,v)}[\phi(u,s;u,v)+(\xi(u,\frac{s+v}{2})-\xi(\frac{s+u}{2},\frac{u+v}{2}))^2+\xi(u,v)\lambda^2_\frac{s+u}{2}]
$$

$$
=\frac{\xi(r,s)}{\lambda_u\lambda_s}[\xi(s,u)+\lambda^2_\frac{s+u}{2}]=\xi(r,s).
$$

By the definition of $c$ and $d$, we also get

$$
c^2\lambda_u+2cd\lambda_\frac{u+v}{2}+d^2\lambda_v
$$
$$
=\frac{1}{a^2}[\lambda_u\lambda^2_\frac{s+v}{2}-2\lambda_\frac{s+u}{2}\lambda_\frac{s+v}{2}\lambda_\frac{u+v}{2}+\lambda_v\lambda^2_\frac{s+u}{2}]=\xi(u,v),
$$

and

$$
ac\lambda_\frac{r+u}{2}+ad\lambda_\frac{r+v}{2}
+bc\lambda_\frac{s+u}{2} +bd\lambda_\frac{s+v}{2}
$$
$$
=-\lambda_\frac{r+u}{2}\lambda_\frac{s+v}{2}+\lambda_\frac{s+u}{2}\lambda_\frac{r+v}{2}-(b/a)\lambda_\frac{s+v}{2}\lambda_\frac{s+u}{2}
+(b/a)\lambda_\frac{s+v}{2}\lambda_\frac{s+u}{2}
$$
$$
=\lambda_\frac{s+u}{2}\lambda_\frac{r+v}{2}-\lambda_\frac{r+u}{2}\lambda_\frac{s+v}{2}
=\xi(\frac{s+u}{2},\frac{r+v}{2})-\xi(\frac{r+u}{2},\frac{s+v}{2}).
$$

Therefore, applying Lemma \ref{l1} for the given choice of
parameters, we obtain the result of Theorem 3.

\end{proof}

\vspace{0.5cm}

\begin{proof} of Corollary 2.

\vspace{0.5cm}

Indeed, by Theorem 3 we get

$$
0\le
\phi(x,x+a;y,y+a)=\xi(x,x+a)\xi(y,y+a)-[\xi(\frac{x+y}{2},\frac{x+y}{2}+a)-\xi(\frac{x+y+a}{2},\frac{x+y+a}{2})]^2
$$
$$
=\sigma_a(x)\sigma_a(y)-\sigma_a^2(\frac{x+y}{2}).
$$

\end{proof}

\vspace{0.5cm}

\begin{proof} of Theorem 4.

\vspace{0.5cm}

In the sequel we need the following elementary assertion.

\begin{lemma}\label{l3}

Denote $D:=\alpha\beta-\gamma^2, \
E:=\eta-\frac{1}{\alpha}[\delta^2+\frac{(\alpha\varepsilon-\gamma\delta)^2}{D}]$.

Then the form
$$
F(a,c):=\alpha a^2+\beta c^2+2\gamma ac+2\delta a+2\varepsilon c
+\eta
$$
is positive semi-definite if and only if $\alpha\ge 0, D\ge 0,
E\ge 0$.

\end{lemma}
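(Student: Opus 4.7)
of Lemma \ref{l3}.

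The plan is the classical one for quadratic polynomials in two real variables: apply completion of squares twice, first eliminating the cross term and the linear term in $a$, then eliminating the linear term in $c$. This will give a decomposition of $F$ into a sum of three terms whose signs are controlled exactly by $\alpha$, $D$ and $E$, making both directions of the equivalence transparent.

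Concretely, assuming $\alpha>0$, I would group $F$ as a quadratic in $a$ alone, treating $c$ as a parameter, and complete the square to obtain
$$
F(a,c)=\alpha\Bigl(a+\tfrac{\gamma c+\delta}{\alpha}\Bigr)^2+\tfrac{1}{\alpha}\bigl[D\,c^2+2(\alpha\varepsilon-\gamma\delta)c+\alpha\eta-\delta^2\bigr].
$$
Assuming further $D>0$, I would then complete the square in $c$ on the bracketed remainder, which, after regrouping, yields
$$
F(a,c)=\alpha\Bigl(a+\tfrac{\gamma c+\delta}{\alpha}\Bigr)^2+\tfrac{D}{\alpha}\Bigl(c+\tfrac{\alpha\varepsilon-\gamma\delta}{D}\Bigr)^2+E.
$$
Sufficiency of $\alpha\ge 0$, $D\ge 0$, $E\ge 0$ is then immediate since each summand is non-negative. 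For necessity, I would note that the first two squares can be independently zeroed out by choosing $c:=-(\alpha\varepsilon-\gamma\delta)/D$ and then $a:=-(\gamma c+\delta)/\alpha$, forcing $F(a,c)=E$, so $E\ge 0$; and that sending $|a|\to\infty$ with $c$ fixed forces $\alpha\ge 0$, while sending $|c|\to\infty$ after the first completion forces $D/\alpha\ge 0$, hence $D\ge 0$.

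The main obstacle I anticipate is the handling of the degenerate boundary cases $\alpha=0$ and $D=0$, where the denominators in the formulas for $E$ (and in the square completions themselves) are undefined. My plan is to treat these separately by a direct argument: if $\alpha=0$, then $F(a,c)$ being non-negative in $a$ forces both $\gamma=0$ and $\delta=0$, reducing the problem to the one-variable form $\beta c^2+2\varepsilon c+\eta$, for which non-negativity is equivalent to $\beta\ge 0$ and $\beta\eta-\varepsilon^2\ge 0$; and these conditions match the degenerate limits of $D\ge 0$ and $E\ge 0$ (understood in the obvious limiting sense). The case $\alpha>0$, $D=0$ is analogous after the first completion of squares. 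Since the lemma is used in the proof of Theorem 3 only in situations where the relevant $\xi$-quantities are strictly positive, a brief remark that these boundary cases can be obtained by continuity from the non-degenerate case is also a natural closing move.
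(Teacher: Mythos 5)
Your proposal is correct and takes essentially the same approach as the paper: the double completion of squares you perform produces exactly the paper's key identity, since your final display is just a rewriting of $F(a,c)=\frac{1}{\alpha}\bigl[(\alpha a+\gamma c+\delta)^2+\frac{1}{D}(Dc+\alpha\varepsilon-\gamma\delta)^2\bigr]+E$, from which the paper states that ``the proof easily follows.'' Your explicit necessity argument (zeroing the two squares to force $F=E$, and letting $|a|$ and $|c|$ grow) and your separate handling of the degenerate cases $\alpha=0$ and $D=0$, where the formula for $E$ is undefined, merely spell out details the paper leaves implicit.
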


\begin{proof}

From the identity

$$
F(a,c)=\frac{1}{\alpha}[(\alpha a+\gamma
c+\delta)^2+\frac{1}{D}(Dc+\alpha\varepsilon-\gamma\delta)^2]+E,
$$

the proof easily follows.

\end{proof}

Now, developing the form $\psi(a,b,c,d)$ with $b=d=1, u=r$, we get

$$
0\le \psi(a,1,c,1)=\xi(r,v)
a^2-2[\lambda_r\lambda_\frac{s+v}{2}-\lambda_\frac{r+s}{2}\lambda_\frac{r+v}{2}]ac
+\xi(r,s)c^2
$$
$$
+2[\lambda_v\lambda_\frac{r+s}{2}-\lambda_\frac{v+s}{2}\lambda_\frac{r+v}{2}]a
+2[\lambda_s\lambda_\frac{r+v}{2}-\lambda_\frac{r+s}{2}\lambda_\frac{s+v}{2}]c
+\xi(s,v),
$$

\bigskip

and, applying Lemma \ref{l3} with

$$
\alpha=\xi(r,v); \ \beta=\xi(r,s); \
\gamma=-[\xi(r,\frac{s+v}{2})-\xi(\frac{r+s}{2},\frac{r+v}{2})];
$$
$$
\delta=[\xi(v,\frac{r+s}{2})-\xi(\frac{v+s}{2},\frac{r+v}{2})]; \
\varepsilon=[\xi(s,\frac{r+v}{2})-\xi(\frac{r+s}{2},\frac{s+v}{2})];
 \ \eta=\xi(s,v),.
$$

we obtain the proof since in this case

$$
D=\phi(r,v;r,s); E=.\frac{F}{\xi(r,v)\phi(r,v;r,s)},
$$

where $F$ is exactly

$$
F=\phi(r,s; r,v) \phi(r,v; s,v)
$$
$$
- [\xi(r,v)(\xi(s,\frac{r+v}{2})-\xi(\frac{r+s}{2},
\frac{s+v}{2}))
$$
$$
+ (\xi(r,\frac{s+v}{2})-\xi(\frac{r+s}{2},
\frac{r+v}{2}))(\xi(v,\frac{r+s}{2})-\xi(\frac{r+v}{2},\frac{s+v}{2}))]^2,
$$

as desired.

\end{proof}

{\bf Remark 1} \ {\it The side result $D\ge 0$ yields another
proof of Lemma \ref{l2}}.

\vspace{0.5cm}

\begin{proof} of Theorem 5.

\vspace{0.5cm}

Developing the form $\phi(a,1,c,d)$ in $a$, we get

$$
0\le
\phi(a,1,c,d)=a^2[\lambda_rC-D^2]+2a[\lambda_\frac{r+s}{2}C-DE]+
[\lambda_sC-E^2],
$$

where

$$
C:=\lambda_u c^2+2\lambda_\frac{u+v}{2}cd+\lambda_v d^2;
$$
$$
D:=\lambda_\frac{r+u}{2}c+\lambda_\frac{r+v}{2}d;
$$
$$
E:=\lambda_\frac{s+u}{2}c +\lambda_\frac{s+v}{2}d.
$$

Hence,

$$
[\lambda_rC-D^2][\lambda_sC-E^2]-[\lambda_\frac{r+s}{2}C-DE]^2\ge
0,
$$

which is equivalent to

$$
0\le \xi(r,s)C+2\lambda_\frac{r+s}{2}DE-\lambda_r E^2-\lambda_s
D^2
$$
$$
=\alpha c^2+2\gamma cd +\beta d^2.
$$

Calculating the coefficients in this case, we obtain

$$
\alpha=\frac{1}{\lambda_s}[\xi(r,s)\xi(u,s)-(\lambda_s
\lambda_\frac{r+u}{2}-\lambda_\frac{r+s}{2}\lambda_\frac{s+u}{2})^2]=\frac{\phi(r,s;u,s)}{\lambda_s};
$$
$$
\beta=\frac{1}{\lambda_s}[\xi(r,s)\xi(v,s)-(\lambda_s
\lambda_\frac{r+v}{2}-\lambda_\frac{r+s}{2}\lambda_\frac{s+v}{2})^2]=\frac{\phi(r,s;v,s)}{\lambda_s};
$$
$$
\gamma=\frac{1}{\lambda_s}[\xi(r,s)(\lambda_s\lambda_\frac{u+v}{2}-\lambda_\frac{s+u}{2}\lambda_\frac{s+v}{2})-(\lambda_s
\lambda_\frac{r+u}{2}-\lambda_\frac{r+s}{2}\lambda_\frac{s+u}{2})(\lambda_s
\lambda_\frac{r+v}{2}-\lambda_\frac{r+s}{2}\lambda_\frac{s+v}{2})]
$$
$$
=\frac{1}{\lambda_s}[\xi(r,s)(\xi(s,\frac{u+v}{2})-\xi(\frac{s+u}{2},\frac{s+v}{2}))-(\xi(s
,\frac{r+u}{2})-\xi(\frac{r+s}{2},\frac{s+u}{2}))(\xi(s,
\frac{r+v}{2})-\xi(\frac{r+s}{2},\frac{s+v}{2}))]
$$

and, since $\alpha\beta-\gamma^2\ge 0$, the proof follows.

\end{proof}

\vspace{0.5cm}

\section{Applications in Information Theory}

\vspace{0.5cm}

Let
$$
\Omega=\{p=\{p_i\} \ | \ p_i>0, \sum p_i=1\},
$$
be the set of finite or infinite discrete probability
distributions.

\bigskip

One of the most general probability measures which is of
importance in Information Theory is the famous Csisz\'{a}r's
$f$-divergence $C_f(p||q)$ ([5]), defined by

\bigskip

{\bf Definition 1} {\it For a convex function $f: (0,\infty)\to
\mathbb R$, the $f$-divergence measure  is given by
$$
C_f(p||q):=\sum q_i f(p_i/q_i),
$$
where $p, q\in \Omega$.}
\bigskip
By the well known Jensen's inequality for convex functions it
follows that
$$
C_f(p||q)\ge f(1),
$$
with equality  if and only if $p=q$.

\bigskip

 Some important information measures are just particular cases of the Csisz\'{a}r's
 $f$-divergence.

\bigskip

For example,

\bigskip

(a) \ taking $f(x)=x^\alpha, \ \alpha>1$, we obtain the
$\alpha$-order divergence defined by
$$
I_\alpha (p||q):=\sum p_i^\alpha q_i^{1-\alpha};
$$

\bigskip

{\bf Remark 2} \ {\it The above quantity is an argument in
well-known theoretical divergence measures such as Renyi
$\alpha$-order divergence  $I^R_\alpha(p||q)$ or Tsallis
divergence $I^T_\alpha(p||q)$, defined as}
$$
I^R_\alpha(p||q):={1\over \alpha-1}\log I_\alpha(p||q); \ \
I^T_\alpha(p||q):={1\over \alpha-1}(I_\alpha(p||q)-1).
$$

\bigskip

 (b) \ for $f(x)=x\log x$, one obtain the Kullback-Leibler divergence ([3]) defined by
$$
K(p||q):=\sum p_i\log(p_i/q_i);
$$

\bigskip

(c) \ for $f(x)=(\sqrt x -1)^2$, one obtain the Hellinger distance
$$
H^2=H^2(p, q)=H^2(q, p):=\sum (\sqrt p_i-\sqrt q_i)^2;
$$

\bigskip

(d) \ if we choose $f(x)=(x-1)^2$, then we get the
$\chi^2$-distance
$$
\chi^2(p, q):=\sum (p_i-q_i)^2/q_i.
$$

\bigskip

Besides, the quantity

$$
I_{1/2}(p||q)=\sum \sqrt {p_iq_i} :=B(p,q)=B(q,p)=B
$$

is known as {\it Bhattacharya coefficient}. Evidently,

$$
0<B \le 1; \ 2(1-B(p,q))=H^2(p,q).
$$

\bigskip

 We quote now inequalities between those  measures which are
already known in the literature (\cite{c}, \cite{v}).

\bigskip

$$
\chi^2(p, q)\ge H^2(p,q);
$$
$$
K(p||q)\ge H^2(p, q); \eqno (*)
$$
$$
K(p||q)\le \log(1+\chi^2(p, q)).
$$

\bigskip

In particular, $K(p||q)\le \chi^2(p, q)$ (\cite{d}).

\bigskip

The generalized measure $K_s(p||q)$, known as {\it the relative
divergence of type $s$} ([6], [7]), is defined by

\bigskip

$$
K_s(p||q):=\begin{cases}
(\sum p_i^s q_i^{1-s}-1)/ s(s-1) &, s\in \mathbb R/\{0, 1\};\\
                   K(q||p)                         &, s=0;\\
                   K(p||q)                         &, s=1.
                   \end{cases}
                   $$
\bigskip
It include the Hellinger and $\chi^2$ distances as particular
cases.

\bigskip

Indeed,
$$
K_{1/2}(p||q)=4(1-\sum\sqrt {p_i q_i})=2\sum(p_i+q_i-2\sqrt {p_i
q_i})=2H^2(p,q);
$$
$$
K_2(p||q)={1\over 2}(\sum {p_i^2\over q_i}-1)={1\over 2}\sum
{(p_i-q_i)^2\over q_i}={1\over 2}\chi^2(p, q).
$$

\bigskip

It will be proved next that $K_s(p||q)$ is log-convex in $s$ for
$s\in \mathbb R$, wherefrom a whole variety of inequalities
connecting the above mentioned measures arise.

\bigskip

For an application of our results, we shall consider discrete
probability laws in the sequel. The continuous case can be treated
analogously.

\bigskip

For $p,q\in \Omega$, let $\mathcal{A}(q,X)$ be the probability law
with the set of positive discrete random variables $X$ and
assigned probabilities $q$. Choosing $X=\frac{p}{q}$, we get

$$
EX=\sum q_i(\frac{p_i}{q_i})=\sum p_i=1; \ EX^s=\sum
q_i(\frac{p_i}{q_i})^s=\sum p_i^sq_i^{1-s}=I_s(p||q).
$$

Therefore, by continuity, for the probability law
$\mathcal{A}(q,\frac{p}{q})$ we obtain that $\lambda_s=K_s(p||q)$
for each $s\in\mathbb R$.

Hence an important consequence follows.

\vspace{0.5cm}

 {\bf Theorem 6} {\it The relative divergence of type $s$ is logarithmically convex in
$s\in\mathbb R$.}

\vspace{0.5cm}

Noting that $\lambda_{1-s}=K_s(q||p)$, we obtain the following
universal estimations.

\vspace{0.5cm}

{\bf Theorem 7} {\it For each $s\in\mathbb R$, we have

$$
K_s(p||q)K_s(q||p)\ge 4H^4,
$$

and

$$
K_s(p||q)+K_s(q||p)\ge 4H^2,
$$

\vspace{0.5cm}

where $H$, as usual, denotes the Hellinger distance.}

\vspace{0.5cm}

\begin{proof}

Indeed,

$$
K_s(p||q)K_s(q||p)=\lambda_s\lambda_{1-s}\ge\lambda_{1/2}^2=K_{1/2}^2=4H^4.
$$

Furthermore,

$$
K_s(p||q)+K_s(q||p)\ge 2\sqrt{K_s(p||q)K_s(q||p)}\ge 4H^2.
$$

\end{proof}

\vspace{0.5cm}

Denoting the symmetric measures $S_a$ and $P_a$ by

$$
S_a=S_a(p||q)=S_a(q||p):=K_a(p||q)+K_a(q||p)-4H^2;
$$

$$
P_a=P_a(p||q)=P_a(q||p):=K_a(p||q)K_a(q||p)-4H^4,
$$

\vspace{0.5cm}

and applying Theorems 1 and 3, we get

\vspace{0.5cm}

{\bf Theorem 8} {\it Estimations

$$
|S_a-S_b|\le (S_{a+b-1/2}S_{a-b+1/2})^{1/2};
$$

$$
|P_a-P_b|\le (P_{a+b-1/2}P_{a-b+1/2})^{1/2},
$$

hold for each $a,b\in\mathbb R$.}

\vspace{0.5cm}

Further illustration will be given by improving the classical
inequalities $(*)$ for Kullback-Leibler divergence $K(p||q)$.

\vspace{0.5cm}

{\bf Theorem 9} {\it We have

$$
f_1(B, \chi^2)\le K(p||q)\le f_2(B, \chi^2),
$$

where

$$
f_1(B,\chi^2)=-2\log B+6\frac{(1-B^2)^2}{1-B^4+\chi^2(q||p)}
$$

and

$$
f_2(B,
\chi^2)=\log(1+\chi^2(p,q))-\frac{32}{9}\frac{(B\sqrt{1+\chi^2(p,q)}-1)^2}{\chi^2(p,q)}.
$$}

\begin{proof}

Indeed, for the law $\mathcal{A}(p,\frac{p}{q})$ a calculation
shows that

$$
\lambda_0=\log\sum \frac{p_i^2}{q_i}-\sum
p_i\log\frac{p_i}{q_i}=\log(1+\chi^2(p,q))-K(p||q);
$$

$$
\lambda_{-1}=\frac{1}{2}[\sum p_i({p_i}/{q_i})^{-1}-(\sum
{p_i^2}/{q_i})^{-1}]=\frac{1}{2}(1-\frac{1}{1+\chi^2(p,q)});
$$

$$
\lambda_{-1/2}=\frac{4}{3}[\sum p_i({p_i}/{q_i})^{-1/2}-(\sum
{p_i^2}/{q_i})^{-1/2}]=\frac{4}{3}(B(p,q)-\frac{1}{\sqrt{1+\chi^2(p,q)}}).
$$

\bigskip

Now, since $\lambda_0\lambda_{-1}\ge \lambda_{-1/2}^2$, we obtain

$$
K(p||q)\le
\log(1+\chi^2(p,q))-\frac{32}{9}\frac{(B(p,q)\sqrt{1+\chi^2(p,q)}-1)^2}{\chi^2(p,q)},
$$

which is a considerable improvement of the target inequality in
$(*)$.

\bigskip

To improve lower bound of $K(p||q)$ let us consider the law
$\mathcal{A}(p,\sqrt{\frac{q}{p}})$. Since then

$$
\lambda_0=\log(\sum p_i\sqrt{{q_i}/{p_i}})-\sum
p_i\log(\sqrt{q_i/p_i})=\log B(p,q)+\frac{1}{2}K(p||q),
$$

and $\lambda_0\ge 0$, we get a better approximation at once

$$
K(p||q)\ge -2\log B(p,q).
$$

Indeed,

$$
-2\log B(p,q)=-2\log(1-H^2(p,q)/2)\ge H^2(p,q).
$$

\bigskip

We can get further improvement by the inequality
$\lambda_0\lambda_4\ge\lambda_2^2$.

Because,

$$
\lambda_2=\frac{1}{2}(\sum p_i(\sqrt{q_i/p_i})^2-(\sum
p_i\sqrt{q_i/p_i})^2)=\frac{1}{2}(1-B^2(p,q)),
$$

and

$$
\lambda_4=\frac{1}{12}(\sum p_i(\sqrt{q_i/p_i})^4-(\sum
p_i\sqrt{q_i/p_i})^4)=\frac{1}{12}(\chi^2(q||p)+1-B^4(p,q)),
$$

we finally obtain

$$
K(p||q)\ge -2\log
B(p,q)+6\frac{(1-B^2(p,q))^2}{1-B^4(p,q)+\chi^2(q||p)}.
$$
\end{proof}

In this way we get better approximation of Kullback-Leibler
divergence in terms of Bhattacharya coefficient and $\chi^2$-
distance.

\end{document}